\documentclass[sigconf,nonacm,balance=false]{acmart}

\usepackage{amsmath}
\usepackage{multirow}
\usepackage{array}
\usepackage{xspace}
\newcommand{\method}{\textsc{CC-MGRec}\xspace}
\newcommand{\rarc}{\textsc{RGSC}\xspace}
\newcommand{\cosig}{\textsc{CC-MG}\xspace}
\newcommand{\cscp}{\textsc{CSCP}\xspace}
\newcommand{\R}{\mathbb{R}}
\newcommand{\U}{\mathcal{U}}
\newcommand{\I}{\mathcal{I}}
\newcommand{\Ec}{\mathcal{E}_{c}}
\newcommand{\Es}{\mathcal{E}_{s}}

\setcopyright{none}
\acmDOI{}
\acmISBN{}

\begin{document}

\title{Rethinking Evidence Modeling: Context-Coupled Graph Learning for Multimodal Recommendation}

\author{Mingqiao Zhang}
\affiliation{%
  \institution{Nanjing University}
  \city{Nanjing}
  \country{China}}
\email{mingqiaozhang@smail.nju.edu.cn}

\author{Qiyao Peng}
\authornote{Corresponding author.}
\affiliation{%
  \institution{Tianjin University}
  \city{Tianjin}
  \country{China}}
\email{qypeng@tju.edu.cn}

\author{Yumeng Wang}
\affiliation{%
  \institution{Tianjin University}
  \city{Tianjin}
  \country{China}}
\email{ymwang@tju.edu.cn}

\author{Wenzhe Dong}
\affiliation{%
  \institution{Tianjin University}
  \city{Tianjin}
  \country{China}}
\email{dongwz@tju.edu.cn}

\author{Hongtao Liu}
\affiliation{%
  \institution{Tianjin University}
  \city{Tianjin}
  \country{China}}
\email{htliu@tju.edu.cn}

\author{Jian Yang}
\affiliation{%
  \institution{Beihang University}
  \city{Beijing}
  \country{China}}
\email{jiaya@buaa.edu.cn}

\renewcommand{\shortauthors}{Mingqiao Zhang et al.}

\begin{abstract}
Multimodal graph recommendation integrates observed interactions
with item images and text to learn user preferences. These inputs
constitute Semantic--Behavioral Evidence (SBE).
Many existing graph methods incorporate collaborative and
multimodal semantic relations, yet their aggregation and state
updates remain context-separated, limiting the continuous
composition of preference evidence across layers and the joint
use of multiple sources.
To address this limitation, we propose \method, a context-coupled
graph learning method that organizes SBE within a common graph
learning context for cross-relation modeling and compositional
propagation.
Its context-coupled multimodal graph integrates user--item
collaborative relations and item--item semantic relations,
coordinating their message scales through joint normalization.
Recommendation-grounded semantic calibration uses training
behavior to adjust semantic neighbor weights while preserving
normalized collaborative propagation coefficients.
Collaborative--semantic compositional propagation updates shared
node states layer by layer to continuously compose collaborative
and multimodal semantic evidence, implicitly forming Preference
Evidence Chains (PECs), including collaborative--semantic and
visual--textual PECs.
Joint execution of relation messages on the same shared node
states within each layer preserves propagation outputs while
reducing intermediate output processing and fusion overhead.
We provide theoretical analysis of the proposed mechanisms.
Experiments on three Amazon datasets demonstrate improved
recommendation quality and faster inference during graph propagation.
\end{abstract}

\ccsdesc[500]{Information systems~Recommender systems}

\keywords{Multimodal recommendation, evidence modeling,
context-coupled graph learning, preference evidence chains}

\maketitle

\section{Introduction}
\begingroup
\setlength{\emergencystretch}{1.5em}

Multimodal graph recommendation expresses SBE as node information
and graph relations, and forms preference representations through
propagation and fusion \citep{xu2025cohesion}.
At the representation level, VBPR incorporates visual content into
preference representations, MMGCN learns modality-specific
representations on the interaction graph, and SG-URInit constructs
model-agnostic user initializations from historical item features
and cluster information
\citep{he2016vbpr,wei2019mmgcn,xu2026well}.
UGT jointly optimizes image--text feature extraction and
recommendation learning, aggregating multimodal content and ID
information on the interaction graph \citep{yi2024unified}. Evidence modeling also organizes graph relations and state updates. GRCN refines observed interactions using content--preference agreement; LATTICE and FREEDOM construct content-derived item
graphs, while DualGNN introduces user associations
\citep{wei2020graph,zhang2021mining,zhou2023tale,wang2021dualgnn}.
In rating-feedback settings, SiMGR represents interactions through
positive and negative relation graphs \citep{lian2026sign}.
For state updates, MIG-GT maintains modality-specific node states
and receptive fields on the same interaction graph, whereas DHMRec
passes interaction-graph states into subsequent item-graph learning
\citep{hu2025modality,zhan2026dhmrec}.

These studies enrich evidence content, associations, and fusion
strategies. Yet does expressing diverse evidence as node information
and graph relations suffice for evidence carried by different
relations to compose continuously across layers and jointly support
preference learning?
To answer this question, we examine the learning environment
formed by node information, weighted relation neighborhoods,
normalization, and state updates, which we term the
\emph{graph learning context}
\citep{he2020lightgcn}.
Among representative designs using both collaborative and
item--item semantic relations, fusion, alignment, or stage-wise
transfer connects modality-specific or relation-specific state
streams. The two relation types do not jointly update the same
evolving node states at every layer.
We refer to this organization as \emph{context-separated}.

Figure~\ref{fig:evidence-contexts} contrasts learning from the same SBE under separate and coupled contexts. The illustrated separate context maintains relation-specific node states.
In a \emph{context-coupled} organization, collaborative and semantic relations jointly determine message scales through joint
normalization and update shared node states at every layer.
Evidence aggregated through one relation can therefore continue
through other relations in subsequent layers.
We use \emph{Preference Evidence Chains} (PECs) to denote ordered, weighted propagation contributions implicitly formed by multilayer
graph learning and incorporated into recommendation representations.
Relative to the illustrated separate context, shared node-state
updates retain within-relation PECs and support two additional
compositions.
\emph{Collaborative--semantic PECs} (C--S PECs) allow interaction
evidence to continue through item semantic relations and semantic
evidence to re-enter collaborative learning.
\emph{Visual--textual PECs} (V--T PECs) allow evidence aggregated
through one modality's relations to continue through the other,
composing the item connections revealed by both modalities.

Two preliminary diagnostics examine the recommendation value of
these compositions.
C--S PEC support formed from training interactions and the
pre-calibration semantic graph covers 10.84\% of held-out positives
on Baby versus 1.32\% of matched unseen controls
(Figure~\ref{fig:fragmentation-evidence}(a)).
In fixed trained models, removing V--T PEC contributions while
retaining constituent evidence, relation support, and single-modality
paths produces the largest ranking losses in high-exposure groups;
some low-exposure groups change little or improve
(Figure~\ref{fig:fragmentation-evidence}(b)).
These support-coverage differences and within-model ranking effects provide
preliminary empirical motivation for supporting useful evidence
compositions.

\begin{figure}[t]
  \centering
  \scalebox{1}[1.15]{%
    \includegraphics[width=\columnwidth]{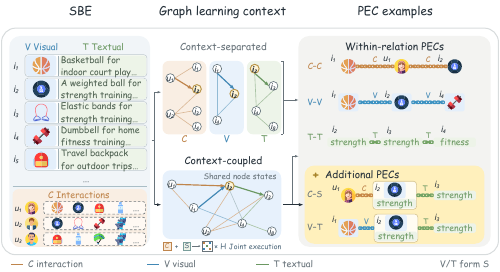}%
  }
  \caption{From SBE to PECs under context-separated and context-coupled
  graph learning. Gold highlights the additional C--S and V--T PECs
  relative to the illustrated separate context.}
  \label{fig:evidence-contexts}
  \Description{The schematic shows paired visual and textual
  item evidence and user interaction histories. Context-separated graph
  learning maintains distinct node states, whereas context-coupled
  graph learning shares states across collaborative and semantic relations.
  Five example PECs use chain links to trace visual,
  textual, and interaction SBE, highlighting additional
  collaborative--semantic and visual--textual PECs. The coupled
  context supports joint execution with less intermediate overhead.}
\end{figure}

\begin{figure}[t]
  \centering
  \includegraphics[width=\columnwidth]{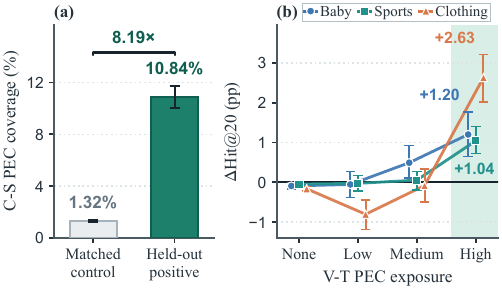}
  \caption{Preliminary diagnostics of PEC composition.
  (a) C--S PEC support coverage of held-out positives and matched
  unseen controls on Baby, using training interactions and the
  pre-calibration semantic graph.
  (b) Hit@20 differences (with minus without V--T PEC contributions)
  in fixed models, grouped by pre-removal exposure.
  Error bars indicate user-clustered 95\% confidence intervals.}
  \label{fig:fragmentation-evidence}
  \Description{The first panel compares collaborative--semantic PEC
  support coverage of held-out positives and matched unseen controls
  on Baby. The second panel shows Hit@20 with minus without
  visual--textual PEC contributions in fixed models, grouped by
  pre-removal exposure. Error bars show user-clustered 95 percent
  confidence intervals.}
\end{figure}

Motivated by these observations, we propose
\emph{Context-Coupled Graph Learning}, implemented by \method,
which organizes SBE into a common graph learning context through
graph construction, semantic calibration, and compositional
propagation.
\textbf{Context-Coupled Multimodal Graph Construction} (\cosig)
organizes node information and weighted relations from multimodal
content and training behavior, jointly normalizing collaborative
and semantic relations.
Under joint normalization, semantic reweighting can also alter
collaborative propagation coefficients.
\textbf{Recommendation-Grounded Semantic Calibration} (\rarc)
reallocates semantic neighbor weights using training behavior while
preserving normalized collaborative propagation coefficients,
with collaborative relations and other construction settings fixed. \textbf{Collaborative--Semantic Compositional Propagation} (\cscp)
then updates shared node states layer by layer to implicitly form and aggregate C--S and V--T PECs.
Shared states also support joint execution of relation messages.
For identical relation operators and current-layer input states,
a single assembled sparse operator produces outputs equivalent to
relation-wise computation followed by summation, while reducing
intermediate output accesses, fusion operations, and operator calls
during graph-propagation inference.

On Amazon Baby, Sports, and Clothing, \method achieves the highest
mean Recall and NDCG among the compared models and faster
graph-propagation inference than MIG-GT.
Our contributions are:
\begin{itemize}
\item We propose \method, a context-coupled graph learning framework
that uses joint normalization and shared node-state updates to
continuously compose SBE across layers, with C--S and V--T PECs
characterizing cross-relation and cross-modal propagation
contributions.
\item We develop recommendation-grounded semantic calibration that
preserves normalized collaborative propagation coefficients, and
output-equivalent joint execution that reduces relation-level
intermediate output accesses and fusion overhead.
\item We analyze evidence composition, calibration invariance, and
execution equivalence, and evaluate recommendation quality and
graph-propagation inference efficiency on three datasets.
Preliminary diagnostics, retrained mechanism controls, and
output-equivalent execution experiments test the corresponding
mechanisms.
\end{itemize}
\endgroup

\section{Related Work}
\suppressfloats[t]

\subsection{Multimodal Recommendation}

Multimodal recommendation extracts complementary preference evidence
from item content and user behavior \citep{liu2024multimodal}.
Images and text describe item appearance, attributes, and usage,
observed interactions reflect behavioral preferences~\citep{liu2021toward}, and behavioral
co-occurrence reveals user or item associations supported by common
interactions. VBPR and ACF incorporate content into preference
representations \citep{he2016vbpr,chen2017attentive}.
SLMRec, MMSSL, and BM3 enhance multimodal representations through
self-supervised learning
\citep{tao2022self,wei2023multi,zhou2023bootstrap}, while SMORE improves
representation quality through spectrum-based denoising and modality
fusion \citep{ong2025spectrum}.
Evidence utilization also extends to input augmentation,
initialization, and behavioral association modeling.
SG-URInit constructs user initializations from historical item features
and group information \citep{xu2026well}; MLLMRec enhances preference
reasoning with multimodal large language models
\citep{dang2026mllmrec}; GRE-MC completes missing modalities through
graph retrieval \citep{li2026robust}.
DualGNN uses common interactions to establish user associations
\citep{wang2021dualgnn}, IGDMRec incorporates item co-occurrence
information to refine semantic relations \citep{guo2026igdmrec}, and
BRIDGE uses behavior to calibrate candidate scores
\citep{li2026behavior}.
These studies enrich the evidence used for preference learning and
the ways in which it is exploited \citep{peng2025survey}.

\subsection{Graph Learning for Recommendation}

Graph learning organizes evidence as nodes and relations, modeling
higher-order associations through neighborhood propagation \citep{wang2019neural}.
Table~\ref{tab:related_taxonomy} compares representative methods
across evidence expression, graph construction and propagation,
and PEC support.

\begin{table}[!t]
\centering
\caption{Comparison of evidence modeling and PEC support.
\textnormal{\textbf{NI}: modality content in initial node states;
\textbf{CO}: explicit behavioral co-occurrence statistics;
\textbf{SR}: item semantic relations;
\textbf{BW}: semantic weights conditioned on behavioral inputs;
\textbf{JN}: joint normalization of observed interactions and semantic relations;
\textbf{SS}: shared node updates across both relation types at every layer.
\textbf{C--S/V--T}: collaborative--semantic/visual--textual PECs.
Architectural support assumes valid paths and sufficient propagation depth.}}
\label{tab:related_taxonomy}
\begingroup
\footnotesize
\setlength{\tabcolsep}{3pt}
\renewcommand{\arraystretch}{1.15}
\def\rwyes{\ding{51}}
\def\rwno{\ensuremath{\times}}

\begin{tabular*}{\columnwidth}
  {@{\extracolsep{\fill}}lcccccccc@{}}
\toprule
& \multicolumn{2}{c}{\textbf{\shortstack{Evidence\\expression}}}
& \multicolumn{4}{c}{\textbf{\shortstack{Graph construction\\and propagation}}}
& \multicolumn{2}{c}{\textbf{\shortstack{PEC\\support}}} \\
\cmidrule(lr){2-3}
\cmidrule(lr){4-7}
\cmidrule(lr){8-9}
\textbf{Method}
& \textbf{NI} & \textbf{CO}
& \textbf{SR} & \textbf{BW} & \textbf{JN} & \textbf{SS}
& \textbf{C--S} & \textbf{V--T} \\
\midrule
MMGCN   & \rwyes & \rwno  & \rwno  & \rwno  & \rwno  & \rwno  & \rwno  & \rwno  \\
GRCN    & \rwyes & \rwno  & \rwno  & \rwno  & \rwno  & \rwno  & \rwno  & \rwno  \\
DualGNN & \rwyes & \rwyes & \rwno  & \rwno  & \rwno  & \rwno  & \rwno  & \rwno  \\
MIG-GT  & \rwyes & \rwno  & \rwno  & \rwno  & \rwno  & \rwno  & \rwno  & \rwno  \\
UGT     & \rwyes & \rwno  & \rwno  & \rwno  & \rwno  & \rwno  & \rwno  & \rwno  \\
\addlinespace[2pt]
LATTICE & \rwno  & \rwno  & \rwyes & \rwno  & \rwno  & \rwno  & \rwno  & \rwyes \\
MICRO   & \rwno  & \rwno  & \rwyes & \rwno  & \rwno  & \rwno  & \rwno  & \rwno  \\
FREEDOM & \rwno  & \rwno  & \rwyes & \rwno  & \rwno  & \rwno  & \rwno  & \rwyes \\
MGCN    & \rwyes & \rwno  & \rwyes & \rwno  & \rwno  & \rwno  & \rwyes & \rwno  \\
DHMRec  & \rwyes & \rwno  & \rwyes & \rwno  & \rwno  & \rwno  & \rwyes & \rwyes \\
SCALE   & \rwno  & \rwno  & \rwyes & \rwno  & \rwno  & \rwno  & \rwno  & \rwno  \\
IGDMRec & \rwno  & \rwyes & \rwyes & \rwyes & \rwno  & \rwno  & \rwno  & \rwyes \\
\midrule
\textbf{\method}
        & \rwyes & \rwyes & \rwyes & \rwyes & \rwyes & \rwyes & \rwyes & \rwyes \\
\bottomrule
\end{tabular*}

\endgroup
\end{table}

EgoGCN organizes multimodal messages through edge-wise modulation,
while GRCN refines observed interaction relations
\citep{chen2022breaking,wei2020graph}.
LATTICE, MICRO, FREEDOM, and MGCN introduce item semantic relations
\citep{zhang2021mining,zhang2022latent,zhou2023tale,yu2023multi};
LGMRec and DOGE further incorporate higher-order structures
\citep{guo2024lgmrec,meng2025doge}.
In rating-feedback settings, SiMGR distinguishes positive and negative
interaction relations \citep{lian2026sign}.
Propagation organization determines how evidence carried by different
relations participates in state updates.
MMGCN and MIG-GT use modality-specific propagation
\citep{wei2019mmgcn,hu2025modality}, whereas CRANE, MGCN, and DHMRec
connect collaborative and semantic learning through staged updates
\citep{dai2026cross,yu2023multi,zhan2026dhmrec}.
UGT unifies multimodal feature learning on the interaction graph
\citep{yi2024unified}, while SCALE learns collaborative and semantic
representations separately and connects them through alignment and
fusion \citep{li2026subspace}.
These studies organize evidence through different graph construction
and propagation choices.
From the perspective of PECs, existing structures support different
forms of evidence composition.
MGCN and DHMRec support C--S PECs through cross-relation state transfer
\citep{yu2023multi,zhan2026dhmrec}, while the merged semantic relations
in LATTICE and FREEDOM support V--T PECs at sufficient propagation
depth \citep{zhang2021mining,zhou2023tale}.
Graph learning research also addresses propagation efficiency \citep{besta2024parallel}.
HAGs reduce redundant computation through aggregation reuse
\citep{jia2020redundancy}, while Hector and TLV-HGNN optimize
relational graph computation \citep{wu2024hector,han2025tlv}.
Our \method organizes multimodal content, observed interactions,
and behavioral co-occurrence information within a common graph
learning context to support continuous evidence composition across
relations and efficient inference.

\section{Problem Formulation}

Let $\U=\{u\}$ and $\I=\{i\}$ denote the user and item sets, with
$M=|\U|$ and $N=|\I|$.  Training interactions are represented by
$\mathbf R\in\{0,1\}^{M\times N}$, where $R_{ui}=1$ indicates an
observed interaction.  We write
$\mathcal N_u=\{i\in\I\mid R_{ui}=1\}$ for the training history of
user $u$, $\mathcal U_i=\{u\in\U\mid R_{ui}=1\}$ for the users
associated with item $i$, and $d_i=|\mathcal U_i|$ for its interaction
degree.  Each item has a visual feature
$\mathbf x_i^v\in\R^{d_v}$ and a textual feature
$\mathbf x_i^t\in\R^{d_t}$ extracted by pretrained models.

We distinguish observed user--item interactions, denoted by $\Ec$,
from content-supported item--item relations, denoted by $\Es$.  Both
are defined over $\mathcal V=\U\cup\I$.  For a directed weighted
adjacency $\mathbf A$, we use the destination-by-source convention:
$A_{pq}$ is the weight of the message from source $q$ to destination
$p$, and propagation is written as $\mathbf A\mathbf H$. We use
$[x]_+=\max(x,0)$, sigmoid $\sigma$, and $\operatorname{norm}$
for vector $\ell_2$ normalization. Given $\mathbf R$ and the item
features, our goal is to rank unobserved items for each user.

\section{Methodology}

\subsection{Framework Overview}

CC-MG is the core of \method: it organizes multimodal node
information and weighted collaborative and semantic relations
in a shared graph learning context (Figure~\ref{fig:framework}).
RGSC refines semantic neighbor weights using training behavior
while preserving their incoming budgets. CC-MG combines these
calibrated connections with semantic node anchors and weighted
interactions, then jointly normalizes the resulting graph.
CSCP learns from this graph through one evolving node state,
enabling cross-relation composition in a single propagation stream.
We first introduce the calibration rule used by CC-MG, followed
by graph construction, propagation, and recommendation learning.

\begin{figure*}[!t]
  \centering
  \includegraphics[width=\textwidth]{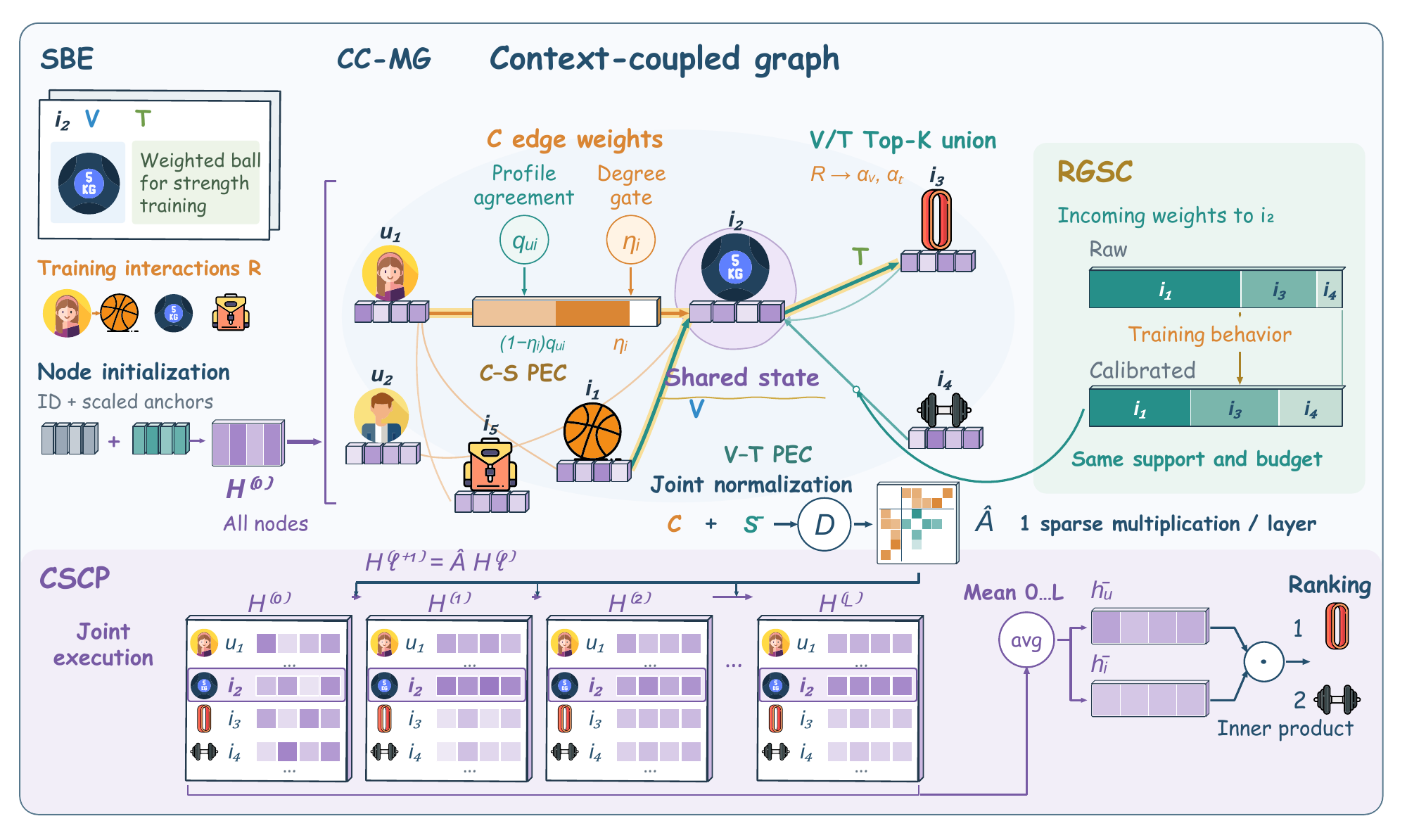}
  \caption{Overview of \method. CC-MG (top) organizes SBE into node
  information and weighted collaborative and semantic relations in
  a jointly normalized context-coupled graph. RGSC (right) reallocates
  semantic neighbor weights while preserving support and each item's
  incoming budget. CSCP (bottom) jointly updates a shared node state
  with one sparse operator, then averages all propagation depths for
  inner-product recommendation.}
  \label{fig:framework}
  \Description{The upper region shows SBE, node initialization, and
  a context-coupled graph with weighted collaborative and semantic
  relations. The right RGSC inset reallocates incoming semantic
  weights with unchanged support and budget. Joint normalization
  produces one shared propagation operator. The lower CSCP region
  shows selected rows of the full-node state across propagation
  depths, followed by depth averaging and inner-product ranking.}
\end{figure*}

\subsection{Recommendation-Grounded Semantic Calibration}
\label{sec:rgsc}

RGSC aligns semantic neighbor strengths with training behavior.
Given the candidate support $\Es$ and initial weights
$w^s_{i\rightarrow j}$ from CC-MG, it reallocates neighbor weights
while preserving each destination's incoming semantic budget.
The resulting semantic block $\bar{\mathbf S}$ supplies
behaviorally calibrated connections to the shared graph.

\paragraph{Behavioral relevance.}
Context--target co-occurrences measure whether a semantic neighbor
is more relevant than its popularity alone would suggest. We divide
each eligible user history into $F_u=\min(F,|\mathcal N_u|)$ folds,
using each fold once as target and the others as context. For a
supported direction $i\rightarrow j$, let $n_{i\rightarrow j}$
count context--target co-occurrences, $\nu_i$ count the corresponding
target opportunities, and $p_j=d_j/\sum_k d_k$ be the popularity
prior. Smoothing toward this prior with $\mu>0$ gives the log lift
\begin{equation}
\label{eq:behavioral-lift}
\ell_{i\rightarrow j}
=\log\frac{n_{i\rightarrow j}+\mu p_j}{(\nu_i+\mu)p_j}.
\end{equation}

A bounded relevance score combines both directions and discounts
limited behavioral evidence. Define the average lift
$\bar\ell_{ij}=(\ell_{i\rightarrow j}+\ell_{j\rightarrow i})/2$
and endpoint confidence $r_i=\nu_i/(\nu_i+\mu)$.
The pair confidence $\chi_{ij}=\sqrt{r_i r_j}$ then scales
the transformed lift:
\begin{equation}
\label{eq:behavioral-relevance}
\phi_{ij}=\chi_{ij}\tanh\!\left(
\frac{\sigma(\bar\ell_{ij})-1/2}{\tau_r}\right).
\end{equation}
Here $\tau_r>0$ controls sensitivity to the averaged lift.
Positive scores favor a neighbor, while negative scores
reduce its weight through the adjustment below.

\paragraph{Budget-preserving reallocation.}
Calibration first adjusts individual weights, then restores each
destination's original semantic budget. With strength $\gamma\geq0$,
the intermediate weight is
\begin{equation}
\label{eq:semantic-tilt}
\widetilde w^s_{i\rightarrow j}
=w^s_{i\rightarrow j}\exp(\gamma\phi_{ij}).
\end{equation}
Let $\mathcal B_j^s=\sum_{k:(k,j)\in\Es}w^s_{k\rightarrow j}$
denote the raw incoming budget. Rescaling the intermediate
weights gives
\begin{equation}
\label{eq:budget-recovery}
\bar w^s_{i\rightarrow j}
=\mathcal B_j^s
\frac{\widetilde w^s_{i\rightarrow j}}
{\sum_{k:(k,j)\in\Es}\widetilde w^s_{k\rightarrow j}}.
\end{equation}
This update preserves the nonzero support and satisfies
$\sum_i\bar w^s_{i\rightarrow j}=\mathcal B_j^s$.
Zero-budget destinations retain zero weights. The calibrated semantic block
$\bar S_{ji}=\bar w^s_{i\rightarrow j}$ is zero outside $\Es$.

\subsection{Context-Coupled Multimodal Graph Construction}
\label{sec:ccmg}

CC-MG organizes node information, relation connectivity, and
relation strength for learning in one propagation stream.
Visual and textual neighborhoods supply semantic connections;
behavioral agreement determines their modality contributions,
and RGSC calibrates individual neighbor weights.
Semantic anchors provide node information and also guide
interaction weighting. The weighted relations are assembled
and jointly normalized into a common propagation operator.
Together, these choices determine what recommendation evidence
is available and how it enters the shared node state.

\paragraph{Semantic neighborhoods.}
Visual and textual neighborhoods contribute their respective
semantic connections to one edge support. For $m\in\{v,t\}$,
projections $f_m$ map item features into a shared $d$-dimensional space.
We obtain normalized codes
$\mathbf z_i^m=\operatorname{norm}(f_m(\mathbf x_i^m))$ and
nonnegative affinities $g_{ij}^m=[(\mathbf z_i^m)^\top\mathbf z_j^m]_+$.
Let $\mathcal N_i^m$ contain the top-$K$ other items under modality $m$.
We retain a pair if either endpoint retrieves the other:
\begin{equation}
\label{eq:semantic-support}
B_{ij}^m=\mathbb I[j\in\mathcal N_i^m\lor i\in\mathcal N_j^m].
\end{equation}
The union $\Es=\{(i,j):B_{ij}^v+B_{ij}^t>0\}$ includes both
directions of each retained pair, allowing either modality to
establish a semantic connection. The retrieval budget $K$
applies to each modality before their supports are combined.

\paragraph{Behavior-informed modality weights.}
Behavioral agreement sets the relative contributions of visual
and textual relations. We form a co-occurrence
reference $o_{ij}=|\mathcal U_i\cap\mathcal U_j|/\sqrt{d_i d_j}$
and denote its top-$K$ neighbors by $\mathcal C_i$. The average overlap
$a_m=(NK)^{-1}\sum_i|\mathcal N_i^m\cap\mathcal C_i|$ measures this
agreement. A temperature $\tau_\alpha>0$ controls the concentration
of the resulting global modality weights:
\begin{equation}
\label{eq:modality-weight}
\alpha_m=
\frac{\exp(a_m/\tau_\alpha)}
{\sum_{m'\in\{v,t\}}\exp(a_{m'}/\tau_\alpha)}.
\end{equation}
These weights combine the supported affinities into raw semantic
edge weights, with $\beta>0$ setting their overall strength:
\begin{equation}
\label{eq:semantic-weight}
w^s_{i\rightarrow j}
=\beta\sum_{m\in\{v,t\}}\alpha_m B_{ij}^m g_{ij}^m.
\end{equation}
The raw semantic block is $S^0_{ji}=w^s_{i\rightarrow j}$.
We apply RGSC (Eq.~\ref{eq:budget-recovery}) to obtain
$\bar{\mathbf S}$ with the same incoming budgets.
Thus, $\alpha_m$ sets contributions across modalities,
while RGSC refines the allocation among individual neighbors.
The same $\alpha_m$ also combines modalities in the node anchors.

\paragraph{Multimodal node information.}
Semantic anchors provide a common profile for initializing
nodes and weighting their interactions. An item anchor combines
its modality codes using the graph's modality weights:
\begin{equation}
\label{eq:item-anchor}
\mathbf e_i^s=\operatorname{norm}\!\left(
\sum_{m\in\{v,t\}}\alpha_m\mathbf z_i^m\right).
\end{equation}
A user anchor summarizes its training history and incorporates
a group-level semantic prototype $\mathbf c_u$:
\begin{equation}
\label{eq:user-anchor}
\mathbf e_u^s=\operatorname{norm}\!\left(
\sum_{i\in\mathcal N_u}\pi_{ui}\mathbf e_i^s+\xi\mathbf c_u\right).
\end{equation}
The history weights satisfy $\pi_{ui}\propto(d_i+1)^{-1/2}$ and
sum to one over $\mathcal N_u$, reducing the relative influence
of popular items; $\xi$ controls the group contribution.
This local-plus-group initialization is related to
SG-URInit~\citep{xu2026well}.

ID embeddings and semantic anchors jointly initialize the
node state. Let $s_p=s_0$ for users and $s_p=s_0+\delta_p$
for items; then
\begin{equation}
\label{eq:node-initialization}
\mathbf h_p^{(0)}=\mathbf e_p^{id}+s_p\mathbf e_p^s.
\end{equation}
The learned scale $s_0$ controls semantic injection, with
item-specific adjustments $\delta_i$ given by a trainable
scale predictor. Its outputs are centered and bounded by
$|\delta_i|\leq\rho$.

\paragraph{Degree- and profile-aware interaction weights.}
Semantic profiles also determine the strength of observed
interactions, connecting node information to graph weighting. Let
$q_{ui}=[\cos(\mathbf e_u^s,\mathbf e_i^s)]_+$ measure this
profile agreement. An item-degree gate controls its influence:
\begin{equation}
\label{eq:degree-gate}
\eta_i=\sigma(a_c\log(d_i+1)+b_c).
\end{equation}
Here $a_c$ and $b_c$ are scalar hyperparameters controlling
degree sensitivity and gate offset, respectively.
The resulting interaction weight is
\begin{equation}
\label{eq:interaction-weight}
C_{ui}=R_{ui}\bigl(\eta_i+(1-\eta_i)q_{ui}\bigr).
\end{equation}
For an observed edge, $\eta_i$ sets a positive baseline weight,
and semantic agreement raises it toward one. Both directions
use the same weight, retaining observed connectivity while
modulating the strength of collaborative messages.

\paragraph{A shared graph learning context.}
The weighted collaborative and semantic relations jointly
define the neighborhoods of one graph over users and items:
\begin{equation}
\label{eq:ccmg-adjacency}
\mathbf A=
\begin{bmatrix}
\mathbf 0 & \mathbf C\\
\mathbf C^\top & \bar{\mathbf S}
\end{bmatrix}.
\end{equation}
Write $\mathbf A=\mathbf A_c+\mathbf A_s$, where $\mathbf A_c$
contains the user--item blocks and $\mathbf A_s$ contains the
item--item block.

Joint normalization makes message strength depend on both
relation families in this shared neighborhood. With
$\mathbf D=\operatorname{Diag}(\mathbf A\mathbf1)$,
the shared operator is
\begin{equation}
\label{eq:shared-normalization}
\widehat{\mathbf A}=\mathbf D^{-1/2}\mathbf A\mathbf D^{-1/2}.
\end{equation}
For item $j$, $D_{jj}=\sum_u C_{uj}+\mathcal B_j^s$ combines
its interaction strength and preserved semantic budget.
RGSC therefore adjusts semantic neighbor allocations without
changing the normalized collaborative coefficients.
CC-MG supplies CSCP with this
single operator and the initial node states, so both relation
families act on the same evolving representations.

\subsection{Collaborative--Semantic Compositional Propagation}
\label{sec:cscp}

We extend LightGCN~\citep{he2020lightgcn} to CC-MG through
multimodal initialization and collaborative--semantic propagation.
CSCP jointly updates shared node states with both relation families
at each layer.
This shared-state update enables cross-relation composition; the assembled
operator evaluates the updates in a single propagation stream.

\paragraph{Collaborative--semantic updates.}
Each layer updates users from their interacted items and items
from both users and semantic neighbors. Let
$\mathbf H_{\U}^{(l)}$ and $\mathbf H_{\I}^{(l)}$ denote the
user and item states at layer $l$, initialized by
Eq.~\ref{eq:node-initialization}. The user--item and item--item
blocks of $\widehat{\mathbf A}$ are denoted by
$\widehat{\mathbf C}$ and $\widehat{\mathbf S}$, respectively.
The user update is
\begin{equation}
\label{eq:cscp-user-update}
\mathbf H_{\U}^{(l+1)}
=\widehat{\mathbf C}\mathbf H_{\I}^{(l)}.
\end{equation}
The item update combines collaborative and semantic messages:
\begin{equation}
\label{eq:cscp-item-update}
\mathbf H_{\I}^{(l+1)}
=\widehat{\mathbf C}^{\top}\mathbf H_{\U}^{(l)}
+\widehat{\mathbf S}\mathbf H_{\I}^{(l)}.
\end{equation}
Here the first term aggregates user preferences along weighted
interactions, and the second aggregates item information along
calibrated semantic relations. Both use CC-MG's joint normalization
and the same preceding node state.

Shared-state updates carry recommendation evidence across relation types.
User information arriving at an item can reach semantic neighbors
in the next layer, while semantically enriched item states can
flow back to users. Deeper updates extend these collaborative--semantic
paths.

Within the semantic block, visual and textual relation contributions
act on the same item states and can compose across layers
to form V--T PECs.

\paragraph{Joint execution.}
A single sparse multiplication evaluates the user and item updates
together. Stacking their states into $\mathbf H^{(l)}$, we compute
\begin{equation}
\label{eq:cscp-propagation}
\mathbf H^{(l+1)}=\widehat{\mathbf A}\mathbf H^{(l)}.
\end{equation}
The assembled operator directly produces the next joint state,
combining collaborative and semantic messages in one sparse
operation per layer. It avoids separately materializing and
merging relation outputs.

\paragraph{Depth aggregation.}
The final representation combines initial node information with
evidence propagated over different path lengths. After $L$ layers,
we average the states:
\begin{equation}
\label{eq:depth-aggregation}
\bar{\mathbf H}=\frac{1}{L+1}\sum_{l=0}^{L}\mathbf H^{(l)}.
\end{equation}
The readout retains initial node information alongside evidence
from collaborative, semantic, and mixed paths at different depths.

\subsection{Prediction and Optimization}

The learned user and item representations determine preference
scores through their inner product:
\begin{equation}
\label{eq:recommendation-score}
\widehat y_{ui}=\bar{\mathbf h}_u^\top\bar{\mathbf h}_i,
\end{equation}
where $\bar{\mathbf h}_u$ and $\bar{\mathbf h}_i$ are the
corresponding rows of $\bar{\mathbf H}$.
We optimize the framework with Bayesian Personalized Ranking
\citep{rendle2012bpr}. Let $\mathcal D_{\mathrm{train}}$ contain
triplets $(u,i,j)$ with an observed item $i$ and a sampled
unobserved item $j$. Defining the score margin
$\Delta_{uij}=\widehat y_{ui}-\widehat y_{uj}$, the objective is
\begin{equation}
\label{eq:recommendation-loss}
\mathcal L=-\sum_{(u,i,j)\in\mathcal D_{\mathrm{train}}}
\log\sigma(\Delta_{uij})+\lambda\|\Theta\|_2^2.
\end{equation}
Here $\lambda$ controls regularization, and the trainable
parameters $\Theta$ comprise ID embeddings, $s_0$, and the
item-scale predictor. The modality projections, semantic anchors
(including group prototypes), and graph support and weights
remain fixed during recommendation training.

\section{Analysis of the Shared Graph Learning Context}
\label{sec:analysis-overview}

We fix the constructed graph and write
$\mathbf P_r=\mathbf D^{-1/2}\mathbf A_r\mathbf D^{-1/2}$ for $r\in\{c,s\}$, using the same joint degree matrix. Thus, $\mathbf P=\mathbf P_c+\mathbf P_s=\widehat{\mathbf A}$. The semantic-connectivity and PEC-composition results are provided in Appendix~\ref{app:semantic-analysis}; joint-execution equivalence and work and communication analysis in Appendix~\ref{sec:execution-analysis}; and calibration invariance in Appendix~\ref{app:calibration-invariance}.

\section{Experiments}
We address five research questions:
\textbf{RQ1:} How does \method compare with the baselines in ranking quality?
\textbf{RQ2:} How do semantic connectivity, graph weighting, normalization,
and C--S PEC composition affect recommendation?
\textbf{RQ3:} How does RGSC affect edge relevance and recommendation quality?
\textbf{RQ4:} Does joint execution reduce inference time at equivalent
outputs, and how does \method compare with MIG-GT in accuracy and
inference time?
\textbf{RQ5:} How do semantic-topology gains vary with user activity,
and how sensitive is performance to graph parameters?

\begin{table*}[!t]
\centering
\caption{Overall performance on three datasets. R and N denote Recall and NDCG; the best baseline is underlined.}
\label{tab:main}
\small
\setlength{\tabcolsep}{4.8pt}
\renewcommand{\arraystretch}{1.06}
\begin{tabular}{@{}l*{4}{c}!{\hspace{3pt}}*{4}{c}!{\hspace{3pt}}*{4}{c}@{}}
\toprule
\multirow{2}{*}{\textbf{Method}}
&\multicolumn{4}{c}{\textbf{Baby}}&\multicolumn{4}{c}{\textbf{Sports}}&\multicolumn{4}{c}{\textbf{Clothing}}\\
\cmidrule(lr){2-5}\cmidrule(lr){6-9}\cmidrule(lr){10-13}
&R@10&R@20&N@10&N@20&R@10&R@20&N@10&N@20&R@10&R@20&N@10&N@20\\
\midrule
MF & .0357&.0575&.0192&.0249&.0432&.0653&.0241&.0298&.0206&.0303&.0114&.0138\\
LightGCN & .0479&.0754&.0257&.0328&.0569&.0864&.0311&.0387&.0361&.0544&.0197&.0243\\
ApeGNN & .0501&.0775&.0267&.0338&.0608&.0892&.0333&.0407&.0378&.0538&.0204&.0244\\
MGDN & .0495&.0783&.0272&.0346&.0614&.0932&.0340&.0422&.0362&.0551&.0199&.0247\\
\midrule
VBPR & .0423&.0663&.0223&.0284&.0558&.0856&.0307&.0384&.0281&.0415&.0158&.0192\\
MMGCN & .0421&.0660&.0220&.0282&.0401&.0636&.0209&.0270&.0227&.0361&.0154&.0154\\
GRCN & .0532&.0824&.0282&.0358&.0599&.0919&.0330&.0413&.0421&.0657&.0224&.0284\\
DualGNN & .0513&.0803&.0278&.0352&.0588&.0899&.0324&.0404&.0452&.0675&.0242&.0298\\
SLMRec & .0521&.0772&.0289&.0354&.0663&.0990&.0365&.0450&.0442&.0659&.0241&.0296\\
LATTICE & .0547&.0850&.0292&.0370&.0620&.0953&.0335&.0421&.0492&.0733&.0268&.0330\\
FREEDOM & .0627&.0992&.0330&.0424&.0717&.1089&.0385&.0481&.0626&\underline{.0932}&.0338&.0416\\
MIG-GT & \underline{.0660}&.1015&\underline{.0357}&\underline{.0448}&\underline{.0738}&.1126&\underline{.0408}&\underline{.0508}&.0628&.0926&\underline{.0349}&\underline{.0425}\\
DHMRec & .0646&\underline{.1016}&.0343&.0436&\underline{.0738}&\underline{.1132}&.0401&.0502&\underline{.0629}&.0931&.0340&.0417\\
\midrule
\textbf{\method{} (Ours)} & \textbf{.0681}&\textbf{.1045}&\textbf{.0369}&\textbf{.0462}&\textbf{.0778}&\textbf{.1152}&\textbf{.0431}&\textbf{.0528}&\textbf{.0649}&\textbf{.0959}&\textbf{.0360}&\textbf{.0439}\\
\bottomrule
\end{tabular}
\end{table*}

\subsection{Experimental Setup}

\begin{table}[!htbp]
\centering
\caption{Statistics of the evaluation datasets.}
\label{tab:datasets}
\small
\begin{tabular}{lrrr}
\toprule
Dataset & Users & Items & Interactions\\
\midrule
Baby & 19,445 & 7,050 & 160,792\\
Sports & 35,598 & 18,357 & 296,337\\
Clothing & 39,387 & 23,033 & 278,677\\
\bottomrule
\end{tabular}
\end{table}

\subsubsection{Datasets and Evaluation}
We use Amazon Baby, Sports, and Clothing with MIG-GT's filtered data and pre-extracted visual/textual features \citep{hu2025modality} (Table~\ref{tab:datasets}). We follow its 8:1:1 split and full-ranking protocol, reporting Recall and NDCG at 10 and 20. Validation Recall@20 guides model selection and early stopping; graph construction, semantic anchors, and calibration use training interactions only.

\subsubsection{Baselines}
We compare two categories of baselines in Table~\ref{tab:main}.
The first uses only user--item interactions:
\textbf{MF}~\citep{koren2009matrix},
\textbf{LightGCN}~\citep{he2020lightgcn},
\textbf{ApeGNN}~\citep{zhang2023apegnn}, and
\textbf{MGDN}~\citep{hu2024mgdcf}.
The second combines interactions with item modality features:
\textbf{VBPR}~\citep{he2016vbpr},
\textbf{MMGCN}~\citep{wei2019mmgcn},
\textbf{GRCN}~\citep{wei2020graph},
\textbf{DualGNN}~\citep{wang2021dualgnn},
\textbf{SLMRec}~\citep{tao2022self},
\textbf{LATTICE}~\citep{zhang2021mining},
\textbf{FREEDOM}~\citep{zhou2023tale},
\textbf{MIG-GT}~\citep{hu2025modality}, and
\textbf{DHMRec}~\citep{zhan2026dhmrec}.
Reproduced and compatible published results follow the same benchmark
protocol.

\subsubsection{Implementation Details}
We use MIG-GT's fixed modality projections \citep{hu2025modality},
128-dimensional embeddings, a batch size of 8,000, and an RTX 3090.
Defaults are $K=10$, $\beta=(1.5,0.9,1.8)$ and depths $(3,2,2)$ for
Baby/Sports/Clothing; RGSC uses $F=5$, $\mu=10$, $\tau_r=0.1$,
and $\gamma=1$. Main results average five seeds; other experiments
use three unless specified. Ablations use paired seeds, fixed
context--target folds, matched initialization schemes, capacity,
and training budgets. Full uses the complete-model configuration
over the ablation seed set.

\subsubsection{Controlled Comparisons}
The graph-weighting and normalization controls are specified in Appendix~\ref{app:control-details}.
\emph{No S} removes semantic edges while retaining multimodal initialization.

The no-mixed and RGSC controls are detailed in Appendix~\ref{app:control-details}.

\subsubsection{Diagnostics and Timing}
Diagnostic definitions and timing operations are provided in Appendix~\ref{app:diagnostics-timing}.
Inference time covers all-node graph propagation at fixed weights
and parameters, excluding feature preparation, graph construction,
calibration, and training. R2 applies $\mathbf P_c$ and $\mathbf P_s$
separately to the same state; Joint applies their assembled sum, with equivalent
outputs (Proposition~\ref{prop:joint-execution}).  R2/Joint
share graph contents; all timing comparisons match hardware and
embedding dimension. MIG-GT accuracy comparisons use five-seed
main-result means.

\subsection{Overall Performance (RQ1)}
Table~\ref{tab:main} shows that \method achieves the highest reported
mean scores across all datasets and cutoffs. Improvements in both
Recall and NDCG indicate gains in retrieval coverage and rank-sensitive
quality. On Sports, relative gains are larger for shorter recommendation
lists. The gains over MIG-GT and DHMRec support the framework's ranking
quality against the compared modality-specific and staged designs.

\subsection{Evidence Organization in CC-MG (RQ2)}
\paragraph{Semantic relations beyond node information.}
Figure~\ref{fig:construction} shows that removing semantic topology
reduces both recommendation metrics on every dataset, most strongly
on Clothing. Semantic relations therefore contribute beyond the content
already supplied by node initialization. The contrast evaluates the
semantic relation block as a whole, including its effect on node degrees.

\paragraph{Graph weighting and normalization.}
In Figure~\ref{fig:ccmg-design}, Full obtains the highest mean scores
across all datasets. Separate normalization yields the largest decrease,
followed by uniform interaction weights; equal modality weights have
a smaller but consistent effect on both metrics. These results support
degree- and profile-aware interaction weighting, behavior-informed
modality weighting, and joint normalization within CC-MG.
The stronger normalization effect highlights the value of letting both
relation families determine node degrees and message strengths.

\begin{figure}[!htbp]
  \centering
  \includegraphics[width=\columnwidth]{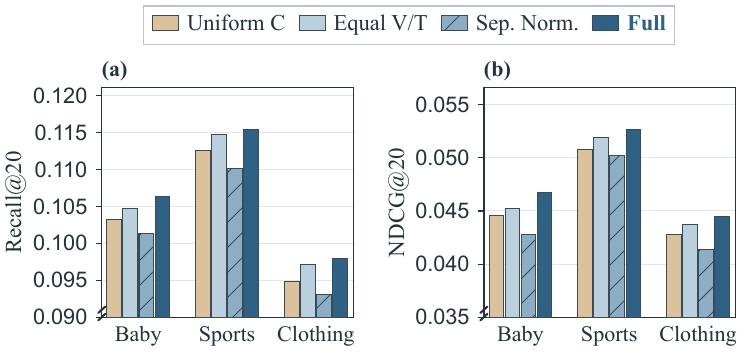}
  \caption{CC-MG construction ablation. (a) Recall@20; (b) NDCG@20. Axes are truncated.}
  \label{fig:ccmg-design}
  \Description{Two grouped-bar panels compare Recall@20 and NDCG@20
  across graph-construction variants on Baby, Sports, and Clothing.}
\end{figure}

\begin{figure}[!htbp]
  \centering
  \includegraphics[width=\columnwidth]{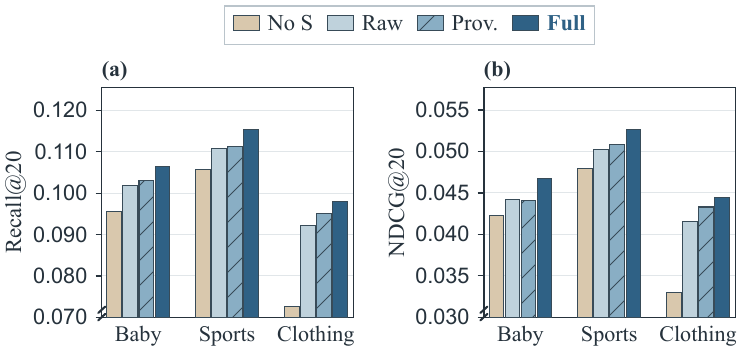}
  \caption{Semantic topology and RGSC ablation. (a) Recall@20; (b) NDCG@20. Axes are truncated.}
  \label{fig:construction}
  \Description{Two grouped-bar panels compare No S, Raw, provisional tilting and Full across three datasets.}
\end{figure}

\paragraph{Cross-relation composition.}
Figure~\ref{fig:mixed-path-ablation} shows consistent Recall@20 and
NDCG@20 gains over no-mixed on all datasets. These gains support the
recommendation value of C--S PEC formation beyond semantic connectivity
and relation-pure propagation. Together with the preliminary diagnostics
in Figure~\ref{fig:fragmentation-evidence}, they provide complementary
evidence for useful compositions in the shared graph learning context.

\begin{figure}[!htbp]
  \centering
  \includegraphics[width=.90\columnwidth]{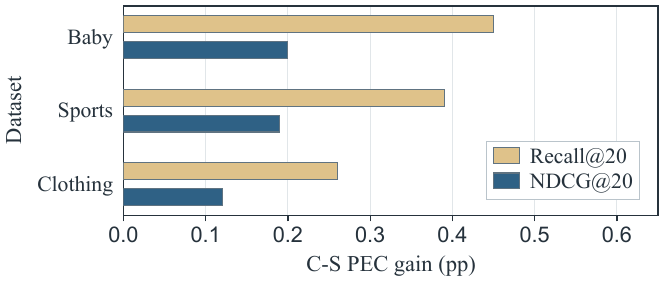}
  \caption{C--S PEC composition. Recall@20 and NDCG@20 gains over no-mixed, in percentage points.}
  \label{fig:mixed-path-ablation}
  \Description{Six paired horizontal bars report gains from cross-relation composition.}
\end{figure}

\subsection{Behavioral Calibration within CC-MG (RQ3)}
Figure~\ref{fig:construction} shows that Full outperforms Raw in both
recommendation metrics on every dataset. By
Eq.~\ref{eq:calibration-invariance}, these gains support behavioral
neighbor reallocation within preserved semantic budgets.

Prov.\ improves Recall@20 over Raw on all datasets and NDCG@20 on
Sports and Clothing. On Baby, NDCG@20 decreases slightly despite higher
Recall@20, so behavioral tilting alone does not improve both metrics
uniformly. Full outperforms Prov.\ in both metrics on every dataset,
supporting budget recovery under the matched configuration. This
contrast includes its effect on semantic and collaborative normalization.

The edge-relevance diagnostics and their relationship to downstream recommendation are reported in Appendix~\ref{app:edge-relevance}.

\subsection{Joint Execution Efficiency (RQ4)}
Figure~\ref{fig:app-runtime-scaling} shows that inference time grows
with directed-edge count, while Joint remains faster than R2 across
the displayed range on all datasets. With equivalent outputs and message
work, the gap is consistent with reduced execution overhead, as analyzed
in Appendix~\ref{sec:execution-analysis}. Timing alone does not isolate
the contribution of each overhead source.

\begin{figure}[!htbp]
  \centering
  \includegraphics[width=.94\columnwidth]{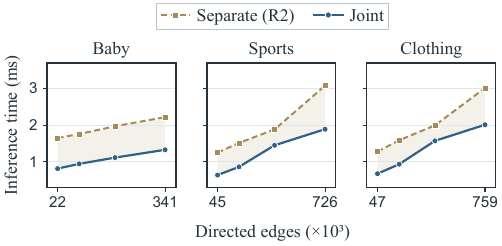}
  \caption{CSCP joint execution. Shading marks the inference-time gap between R2 and Joint. Directed-edge ticks are in thousands.}
  \label{fig:app-runtime-scaling}
  \Description{Three dataset panels plot inference time against
  directed-edge count for Separate (R2) and Joint with equivalent outputs.}
\end{figure}

Figure~\ref{fig:app-accuracy-efficiency} shows that \method combines
Recall@20 gains of 0.30, 0.26, and 0.33 percentage points with inference
speedups of $3.25\times$, $5.65\times$, and $7.88\times$ on Baby,
Sports, and Clothing. This complete-model comparison demonstrates a
combined accuracy and runtime benefit, complementing the R2/Joint
control. Since the models differ in other components, it does not
attribute the entire speedup to joint execution.

\begin{figure}[!htbp]
  \centering
  \includegraphics[width=.90\columnwidth]{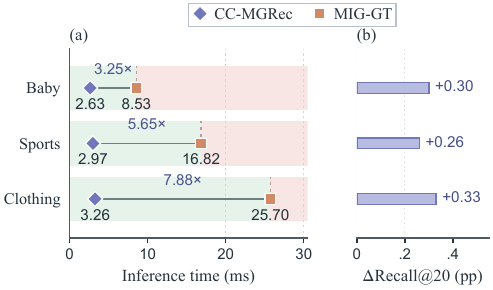}
  \caption{Complete-model comparison with MIG-GT.
  (a) Inference times; green/red bands mark times below/above each MIG-GT
  reference, and labels above the pairs give speedup
  $t_{\mathrm{GT}}/t_{\mathrm{CC}}$.
  (b) Recall@20 gains, $100(R_{\mathrm{CC}}-R_{\mathrm{GT}})$ pp.}
  \label{fig:app-accuracy-efficiency}
  \Description{Three aligned dataset rows pair CC-MGRec and MIG-GT
  inference times, with pale backgrounds split at each MIG-GT reference.
  Time ratios appear above the pairs. Separate horizontal bars show
  Recall@20 gains over MIG-GT.}
\end{figure}

Across the controls, semantic connectivity, C--S PEC composition, and
calibration support ranking quality, while joint execution reduces
inference overhead. These contrasts assess complementary roles under
matched comparisons; their gains cannot be added as independent
component contributions.

\subsection{User Activity and Parameter Sensitivity (RQ5)}
Figure~\ref{fig:app-sparsity} shows positive semantic-topology gains
in every activity group. Tail users have the largest gain within each
dataset, while Clothing retains a large benefit for Head users.
Semantic relations thus help across activity levels, with the strongest
benefit for the least active users.

\begin{figure}[!htbp]
  \centering
  \includegraphics[width=.90\columnwidth]{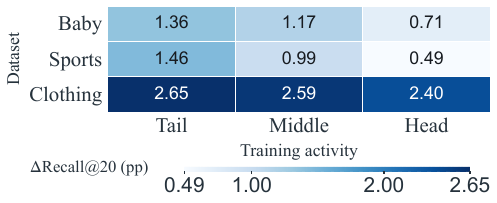}
  \caption{Semantic topology gains by training activity. Cells show Recall@20 gains over No S, in percentage points.}
  \label{fig:app-sparsity}
  \Description{Nine cells retain positive gains across Tail, Middle and Head user groups.}
\end{figure}

The parameter-sensitivity results and analysis are provided in Appendix~\ref{app:parameter-sensitivity}.

\section{Conclusion}
\method organizes SBE within a shared graph learning context for
multimodal recommendation, supporting semantic connectivity and
C--S and V--T PEC composition. CC-MG jointly normalizes collaborative
and semantic relations; RGSC calibrates semantic neighbors while
preserving normalized collaborative propagation coefficients; and
CSCP updates shared node states across layers. Experiments on three
Amazon datasets demonstrate improved ranking over the compared
baselines. Controlled comparisons support semantic connectivity
beyond node information, C--S PEC composition beyond relation-pure
propagation, and behavioral calibration. The output-equivalent
R2/Joint control shows that joint execution reduces inference time
while preserving propagation outputs. Timing covers fixed-weight
graph propagation, excluding feature preparation, graph construction,
calibration, and training.


\clearpage
\bibliographystyle{ACM-Reference-Format}
\bibliography{reference}

\clearpage
\appendix
\section{Theoretical Analysis and Proofs}
\label{app:theory}
\label{sec:properties}

The analysis addresses the two objectives of our motivation:
aggregating recommendation evidence through semantic connectivity
and cross-relation composition, and executing their propagation
with low overhead. We fix the constructed graph and write
$\mathbf P_r=\mathbf D^{-1/2}\mathbf A_r\mathbf D^{-1/2}$
for $r\in\{c,s\}$, using the same joint degree matrix.
Thus, $\mathbf P=\mathbf P_c+\mathbf P_s=\widehat{\mathbf A}$.
The operators are nonnegative and may be asymmetric;
symmetry and commutativity are not required.
Inverse degree factors are zero at zero-degree nodes.
Let $n=|\U|+|\I|$, let $d$ be the state width, and let $L\geq1$
be the propagation depth; empty sums below are zero.

\subsection{Semantic Connectivity and Cross-Relation Composition}
\label{app:semantic-analysis}

The final readout retains the propagation routes available
at every depth. By Eqs.~\ref{eq:cscp-propagation}
and~\ref{eq:depth-aggregation}, $\bar{\mathbf H}=\mathbf K_L\mathbf H^{(0)}$,
where
\begin{equation}
\label{eq:propagation-readout}
\mathbf K_L=\frac{1}{L+1}\sum_{\ell=0}^{L}\mathbf P^\ell.
\end{equation}
The following comparisons hold the initial node states and
joint normalization fixed to isolate these routes.
Ordered adjacency products represent composite relations
\citep[Section~3.1]{yun2019graph}. For a relation sequence
$\omega=(r_1,\ldots,r_\ell)$, define
$\mathbf P_\omega=\mathbf P_{r_\ell}\cdots\mathbf P_{r_1}$,
in propagation order. Let $\Omega_\ell$ contain sequences
that use both $c$ and $s$.

\begin{proposition}[Decomposition of propagated evidence]
\label{prop:evidence-decomposition}
The shared-state readout decomposes as
\begin{equation}
\label{eq:evidence-decomposition}
\mathbf K_L=\mathbf K_c+\mathbf K_s+\mathbf K_{\mathrm{mix}},
\end{equation}
where
\begin{equation}
\label{eq:collaborative-readout}
\mathbf K_c=\frac{1}{L+1}\left(\mathbf I_n+
\sum_{\ell=1}^{L}\mathbf P_c^\ell\right),
\end{equation}
\begin{equation}
\label{eq:semantic-readout}
\mathbf K_s=\frac{1}{L+1}\sum_{\ell=1}^{L}\mathbf P_s^\ell,
\end{equation}
and
\begin{equation}
\label{eq:mixed-readout}
\mathbf K_{\mathrm{mix}}=\frac{1}{L+1}
\sum_{\ell=2}^{L}\sum_{\omega\in\Omega_\ell}\mathbf P_\omega.
\end{equation}
These terms respectively retain collaborative-only propagation,
semantic-only propagation, and cross-relation composition.
All three operators are entrywise nonnegative.
\end{proposition}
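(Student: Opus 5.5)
The plan is to expand each power $\mathbf P^\ell=(\mathbf P_c+\mathbf P_s)^\ell$ by multilinearity and then regroup the resulting words. Since the operators need not commute, I will not use any binomial-type formula. Instead I distribute the product factor by factor. For $\ell\geq1$ this gives
\begin{equation*}
\mathbf P^\ell=\sum_{\omega\in\{c,s\}^\ell}\mathbf P_{r_\ell}\cdots\mathbf P_{r_1}
=\sum_{\omega\in\{c,s\}^\ell}\mathbf P_\omega .
\end{equation*}
This is a routine induction on $\ell$: write $\mathbf P^{\ell+1}=\mathbf P\,\mathbf P^\ell$ and distribute the leftmost factor. The induction also fixes the propagation-order convention, because the new relation is appended as $r_{\ell+1}$ on the left, exactly as in the definition of $\mathbf P_\omega$.

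Next I partition $\{c,s\}^\ell$ into three disjoint classes: the constant word $(c,\ldots,c)$, the constant word $(s,\ldots,s)$, and $\Omega_\ell$, the words using both letters. The constant words contribute $\mathbf P_c^\ell$ and $\mathbf P_s^\ell$. For $\ell=1$ the set $\Omega_1$ is empty, so the mixed sums start at $\ell=2$, consistent with the stated empty-sum convention. The $\ell=0$ term $\mathbf P^0=\mathbf I_n$ goes into $\mathbf K_c$ by the convention of Eq.~\ref{eq:collaborative-readout}. Summing over $\ell=0,\ldots,L$ and dividing by $L+1$ then reproduces Eqs.~\ref{eq:evidence-decomposition}--\ref{eq:mixed-readout} exactly from Eq.~\ref{eq:propagation-readout}.

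For nonnegativity, note that $\mathbf A_c$ and $\mathbf A_s$ are entrywise nonnegative. The blocks $\mathbf C$ and $\bar{\mathbf S}$ are built from $R_{ui}\geq0$, $\eta_i\in(0,1)$, $q_{ui}\geq0$, $g^m_{ij}\geq0$, $\beta,\alpha_m>0$, and positive exponential tilts with nonnegative budget ratios. The diagonal $\mathbf D^{-1/2}$ has nonnegative entries, including the convention that its entries are zero at zero-degree nodes. Hence $\mathbf P_c$ and $\mathbf P_s$ are nonnegative. Products and sums of nonnegative matrices are nonnegative, and $\mathbf I_n$ and $1/(L+1)$ are nonnegative, so all three operators are entrywise nonnegative.

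The only delicate point is making sure that $\mathbf D$ is the joint degree matrix in both $\mathbf P_c$ and $\mathbf P_s$, so that $\mathbf P_c+\mathbf P_s=\widehat{\mathbf A}$ holds exactly. This follows from distributivity: $\mathbf D^{-1/2}(\mathbf A_c+\mathbf A_s)\mathbf D^{-1/2}=\mathbf P_c+\mathbf P_s$.
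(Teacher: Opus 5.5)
Your proposal is correct and follows the paper's proof essentially verbatim. Both prove the ordered word expansion of $\mathbf P^\ell$ by induction with left multiplication, split the words into all-$c$, all-$s$, and mixed sequences with the identity counted once at depth zero, and get nonnegativity from products and sums of nonnegative matrices. Your added check that $\mathbf C$, $\bar{\mathbf S}$, and $\mathbf D^{-1/2}$ are entrywise nonnegative makes explicit what the paper simply assumes when it states that the operators are nonnegative.
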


\begin{proof}
Repeated left multiplication by $\mathbf P_c+\mathbf P_s$
gives
\begin{equation}
\label{eq:ordered-expansion}
\mathbf P^\ell=\sum_{\omega\in\{c,s\}^{\ell}}\mathbf P_\omega.
\end{equation}
For $\ell=1$, this is the definition of $\mathbf P$.
If it holds at depth $\ell$, left multiplication appends
either $c$ or $s$ to each sequence, producing every ordered
sequence of length $\ell+1$ exactly once. This establishes
the expansion by induction without exchanging matrix factors.
Partitioning the sequences into all-$c$, all-$s$, and mixed
sequences, then averaging depths zero through $L$, yields
Eq.~\ref{eq:evidence-decomposition}. The identity is counted
once at depth zero. Nonnegativity follows from nonnegative
matrix products and sums.
\end{proof}

\begin{corollary}[Semantic connectivity beyond node information]
\label{cor:semantic-connectivity}
For the same initial states, a collaborative-only operator
$\mathbf K_c$ omits the contributions $\mathbf K_s+\mathbf K_{\mathrm{mix}}$.
For any positive normalized semantic edge $i\rightarrow j$ with $i\ne j$,
\begin{equation}
\label{eq:semantic-dependence}
(\mathbf K_L-\mathbf K_c)_{ji}
\geq\frac{(\mathbf P_s)_{ji}}{L+1}>0.
\end{equation}
If no collaborative-only walk of length at most $L$ connects
$i$ to $j$, then $(\mathbf K_c)_{ji}=0$ but $(\mathbf K_L)_{ji}>0$.
\end{corollary}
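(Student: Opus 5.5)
The plan is to read everything off Proposition~\ref{prop:evidence-decomposition}. By Eq.~\ref{eq:evidence-decomposition},
\begin{equation*}
\mathbf K_L-\mathbf K_c=\mathbf K_s+\mathbf K_{\mathrm{mix}}.
\end{equation*}
This identity already gives the first claim: for the same initial states $\mathbf H^{(0)}$, the collaborative-only readout $\mathbf K_c\mathbf H^{(0)}$ differs from $\bar{\mathbf H}=\mathbf K_L\mathbf H^{(0)}$ exactly by $(\mathbf K_s+\mathbf K_{\mathrm{mix}})\mathbf H^{(0)}$.

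For the lower bound in Eq.~\ref{eq:semantic-dependence}, I will use the entrywise nonnegativity established in the proposition. First, $\mathbf K_{\mathrm{mix}}\geq 0$ entrywise, so $(\mathbf K_L-\mathbf K_c)_{ji}\geq(\mathbf K_s)_{ji}$. Next, every summand $\mathbf P_s^\ell$ in Eq.~\ref{eq:semantic-readout} is nonnegative, so I can drop all terms with $\ell\geq2$ and keep only $\ell=1$ (available since $L\geq1$). This gives $(\mathbf K_s)_{ji}\geq(\mathbf P_s)_{ji}/(L+1)$.

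Positivity comes from the hypothesis. A positive normalized semantic edge $i\rightarrow j$ means $(\mathbf P_s)_{ji}=D_{jj}^{-1/2}\bar S_{ji}D_{ii}^{-1/2}>0$. If one instead starts from $\bar S_{ji}>0$, the degrees $D_{jj}\geq\bar S_{ji}>0$ and $D_{ii}$ are positive, so the zero-degree convention does not kill the entry. Here the destination-by-source convention matters: the entry $(\mathbf P_s)_{ji}$ carries the message from $i$ to $j$. The condition $i\neq j$ only ensures the edge is off-diagonal, so that the identity term in $\mathbf K_c$ does not enter $(\mathbf K_c)_{ji}$ (it would not matter for the difference anyway).

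For the last claim, I expand $(\mathbf P_c^\ell)_{ji}$ as a sum over length-$\ell$ walks from $i$ to $j$ in $\mathbf A_c$, each weighted by a product of nonnegative entries. Nonzero entries of $\mathbf P_c$ coincide with nonzero entries of $\mathbf A_c$ on nodes of positive degree. If no collaborative-only walk of length $1\leq\ell\leq L$ connects $i$ to $j$, every such entry is $0$. The identity term contributes $0$ because $i\neq j$. Hence $(\mathbf K_c)_{ji}=0$, and the bound above gives $(\mathbf K_L)_{ji}\geq(\mathbf P_s)_{ji}/(L+1)>0$.

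The only delicate point is the bookkeeping of the walk/support correspondence under the zero-degree convention and the index order. Everything else is monotonicity of nonnegative sums.
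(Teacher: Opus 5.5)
Your proof is correct and matches the paper's argument step for step. Like the paper, you use the identity $\mathbf K_L-\mathbf K_c=\mathbf K_s+\mathbf K_{\mathrm{mix}}$ from Proposition~\ref{prop:evidence-decomposition}, entrywise nonnegativity to keep only the depth-one term $\mathbf P_s/(L+1)$, and the walk expansion of $\mathbf P_c^\ell$ together with $i\neq j$ to make $(\mathbf K_c)_{ji}$ vanish. (Your aside that $D_{ii}>0$ when starting from $\bar S_{ji}>0$ silently relies on the symmetry of the semantic support and raw weights. It is unnecessary here, because the hypothesis is stated directly as $(\mathbf P_s)_{ji}>0$.)
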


\begin{proof}
The decomposition gives
$\mathbf K_L-\mathbf K_c=\mathbf K_s+\mathbf K_{\mathrm{mix}}$.
Its terms are nonnegative, and the depth-one semantic term
contains $\mathbf P_s/(L+1)$, proving the bound.
Each entry of $\mathbf P_c^\ell$ sums the weights of
collaborative-only walks of length $\ell$. With no such walk
and distinct endpoints, every contribution to $(\mathbf K_c)_{ji}$
is zero.
\end{proof}

Semantic connectivity therefore supplies explicit neighbor
aggregation in addition to the information carried by the
initial node states. In CC-MG, its first-layer item contribution
is $\widehat{\mathbf S}\mathbf H_{\I}^{(0)}$.
The result characterizes propagation of given node inputs;
their initialization may already contain semantic information.

\begin{corollary}[Strict contribution of C--S PECs]
\label{cor:mixed-dependence}
The relation-pure readout used by no-mixed is
$\mathbf K_L^{\mathrm{pure}}=\mathbf K_c+\mathbf K_s$.
Suppose a mixed walk
$\pi=(v_0=q,\ldots,v_t=p)$ has $t\leq L$ and positive
normalized weights
$\widehat w_a=(\mathbf P_{r_a})_{v_a v_{a-1}}$,
where its relation sequence uses both $c$ and $s$. Then
\begin{equation}
\label{eq:mixed-dependence}
(\mathbf K_L-\mathbf K_L^{\mathrm{pure}})_{pq}
\geq\frac{1}{L+1}\prod_{a=1}^{t}\widehat w_a>0.
\end{equation}
For distinct $p,q$, if neither a collaborative-only nor a
semantic-only walk of length at most $L$ connects them,
$(\mathbf K_L^{\mathrm{pure}})_{pq}=0$ and $(\mathbf K_L)_{pq}>0$.
\end{corollary}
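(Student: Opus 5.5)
The proof reduces to Proposition~\ref{prop:evidence-decomposition}. Since $\mathbf K_L=\mathbf K_c+\mathbf K_s+\mathbf K_{\mathrm{mix}}$ and $\mathbf K_L^{\mathrm{pure}}=\mathbf K_c+\mathbf K_s$, we get $\mathbf K_L-\mathbf K_L^{\mathrm{pure}}=\mathbf K_{\mathrm{mix}}$. This difference is a sum of entrywise nonnegative products $\mathbf P_\omega/(L+1)$ over $\omega\in\Omega_\ell$ and $2\le\ell\le L$. It therefore suffices to exhibit one term of this sum whose $(p,q)$ entry is at least $\prod_a\widehat w_a$, and then drop all remaining terms by nonnegativity.

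Take the relation sequence $\omega=(r_1,\ldots,r_t)$ of the walk $\pi$. It uses both $c$ and $s$, so $t\geq2$ and $\omega\in\Omega_t$ with $t\le L$; hence $\mathbf P_\omega/(L+1)$ appears in $\mathbf K_{\mathrm{mix}}$. Expand the entry of the ordered product $\mathbf P_\omega=\mathbf P_{r_t}\cdots\mathbf P_{r_1}$ as a sum over intermediate indices:
\begin{equation*}
(\mathbf P_\omega)_{pq}=\sum_{u_1,\ldots,u_{t-1}}\prod_{a=1}^{t}(\mathbf P_{r_a})_{u_a u_{a-1}},\qquad u_0=q,\ u_t=p .
\end{equation*}
All summands are nonnegative. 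The summand with $u_a=v_a$ equals $\prod_a\widehat w_a$, so $(\mathbf P_\omega)_{pq}\ge\prod_a\widehat w_a$. This gives Eq.~\ref{eq:mixed-dependence}, and strict positivity follows because each $\widehat w_a>0$.

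The only point needing care is the index convention. The paper uses destination-by-source order with $\mathbf P_\omega$ written in propagation order, so the factor $\mathbf P_{r_a}$ acting at step $a$ contributes the entry indexed (destination $v_a$, source $v_{a-1}$). I would verify this explicitly by induction on the product length. I also note that the walk may revisit nodes, which does not matter for the argument.

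For the last claim, use the same walk expansion for $\mathbf K_c$ and $\mathbf K_s$. The entry $(\mathbf P_c^\ell)_{pq}$ is a sum of weights of collaborative-only walks of length $\ell$ from $q$ to $p$, and similarly for $\mathbf P_s^\ell$. If no such walks of length $1\le\ell\le L$ exist, every one of these terms vanishes. The identity term contributes nothing because $p\ne q$, so $(\mathbf K_L^{\mathrm{pure}})_{pq}=0$. By Eq.~\ref{eq:mixed-dependence}, $(\mathbf K_L)_{pq}=(\mathbf K_{\mathrm{mix}})_{pq}>0$.
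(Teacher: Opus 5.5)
Your proof is correct and follows the paper's argument. You reduce the difference to $\mathbf K_{\mathrm{mix}}$ via Proposition~\ref{prop:evidence-decomposition}, isolate the walk's summand $\prod_a\widehat w_a$ in the depth-$t$ term $\mathbf P_\omega/(L+1)$, discard the remaining nonnegative terms, and get the pure-walk claim from the walk-sum reading of $\mathbf P_c^\ell$ and $\mathbf P_s^\ell$. You are also more explicit than the paper about why $t\ge 2$ places $\omega$ in $\Omega_t$ and why the depth-zero identity term vanishes for $p\neq q$.
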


\begin{proof}
By Proposition~\ref{prop:evidence-decomposition}, the difference
is $\mathbf K_{\mathrm{mix}}$.
For the sequence of $\pi$, the $(p,q)$ entry of
$\mathbf P_{r_t}\cdots\mathbf P_{r_1}$ sums products of weights
over matching walks. One summand is
$\prod_{a=1}^{t}\widehat w_a>0$. This product appears in the
depth-$t$ contribution to $\mathbf K_{\mathrm{mix}}$, and all
remaining terms are nonnegative, establishing the bound.
Absence of pure walks makes both pure-operator entries zero.
\end{proof}

A positive readout coefficient establishes a propagation
dependence between the two nodes. Holding the graph fixed,
the sensitivity to an initial node state is
\begin{equation}
\label{eq:node-state-dependence}
\frac{\partial\bar{\mathbf h}_p}{\partial\mathbf h_q^{(0)}}
=(\mathbf K_L)_{pq}\mathbf I_d.
\end{equation}
For example, $\mathbf P_c\mathbf P_s$ lets an item's state
reach a user through a semantic neighbor and an observed
interaction; $\mathbf P_s\mathbf P_c$ lets user information
continue through semantic neighbors. The bounds explain
when shared-state propagation adds such dependencies to the
final readout. These dependencies concern propagated
representations; the final inner product can also combine
information from both pure branches.

For modality composition, RGSC applies the same edge-wise tilt and
destination-wise recovery factor to both terms of the raw semantic weight.
Thus $\bar{\mathbf S}=\bar{\mathbf S}_v+\bar{\mathbf S}_t$ and,
under the shared degree matrix, $\mathbf P_s=\mathbf P_v+\mathbf P_t$.
Expanding semantic products in the same propagation order yields V--T PECs
when both modality operators contribute positive edges.

\subsection{Efficiency through Joint Execution}
\label{sec:execution-analysis}

We distinguish edge--feature message work from relation-output
communication, following work-based GNN analysis and aggregation
reorganization \citep{besta2024parallel,jia2020redundancy}.

\paragraph{Equivalent shared-neighborhood aggregation.}
Let $\mathcal N_r(p)=\{q:(\mathbf P_r)_{pq}>0\}$ for $r\in\{c,s\}$,
$\mathcal N(p)=\mathcal N_c(p)\cup\mathcal N_s(p)$, and
$P_{pq}=(\mathbf P_c)_{pq}+(\mathbf P_s)_{pq}$. Joint computes
\begin{equation}
\label{eq:shared-neighborhood-aggregation}
\mathbf h_p^{(\ell+1)}
=\sum_{q\in\mathcal N(p)}P_{pq}\mathbf h_q^{(\ell)}.
\end{equation}

\begin{proposition}[Equivalent shared-neighborhood aggregation]
\label{prop:joint-execution}
For the same fixed graph and initialization, Joint and R2
produce identical layer states and readouts in exact arithmetic.
With the same objective and parameterization, their parameter
gradients coincide wherever the derivatives exist.
\end{proposition}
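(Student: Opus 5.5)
The argument is an induction on the layer index that uses only distributivity of matrix multiplication over addition; the gradient claim then follows because both procedures define the same function of the parameters. First I will fix notation: R2 computes $\mathbf H_{\mathrm{R2}}^{(\ell+1)}=\mathbf P_c\mathbf H_{\mathrm{R2}}^{(\ell)}+\mathbf P_s\mathbf H_{\mathrm{R2}}^{(\ell)}$. Joint computes $\mathbf H_{\mathrm{J}}^{(\ell+1)}=\mathbf P\mathbf H_{\mathrm{J}}^{(\ell)}$, whose row $p$ is Eq.~\ref{eq:shared-neighborhood-aggregation}.

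The first step reconciles the neighborhood form with the matrix form. For every $q\notin\mathcal N(p)$ we have $(\mathbf P_c)_{pq}=(\mathbf P_s)_{pq}=0$, because both operators are entrywise nonnegative and so no cancellation can create a zero. Restricting the sum to $\mathcal N(p)$ therefore loses nothing, and row $p$ of $\mathbf P\mathbf H$ equals $\sum_{q\in\mathcal N(p)}P_{pq}\mathbf h_q$. The same argument shows that row $p$ of $\mathbf P_r\mathbf H$ equals $\sum_{q\in\mathcal N_r(p)}(\mathbf P_r)_{pq}\mathbf h_q$. Splitting the coefficient $P_{pq}=(\mathbf P_c)_{pq}+(\mathbf P_s)_{pq}$ then recovers the R2 rows term by term. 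Equivalently, in matrix form, $(\mathbf P_c+\mathbf P_s)\mathbf H=\mathbf P_c\mathbf H+\mathbf P_s\mathbf H$.

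Next I run the induction. Both procedures start from the same $\mathbf H^{(0)}$, given by Eq.~\ref{eq:node-initialization} with identical parameters. If $\mathbf H_{\mathrm{R2}}^{(\ell)}=\mathbf H_{\mathrm{J}}^{(\ell)}$, the identity above gives equality at layer $\ell+1$. Both procedures also use the same joint $\mathbf D$, so $\mathbf P=\widehat{\mathbf A}$ exactly. The depth averages in Eq.~\ref{eq:depth-aggregation} therefore coincide, and equivalently both readouts equal $\mathbf K_L\mathbf H^{(0)}$. The scores in Eq.~\ref{eq:recommendation-score} and the loss in Eq.~\ref{eq:recommendation-loss} then agree as well.

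For the gradient claim, I observe that the graph is held fixed, so both pipelines realize the same map $\Theta\mapsto\mathcal L(\Theta)$. The two maps are identical as functions, not merely equal at one point, so their derivatives coincide wherever they exist. A chain-rule check confirms this. The backward pass through a layer is $\mathbf P^\top\mathbf G$ for Joint and $\mathbf P_c^\top\mathbf G+\mathbf P_s^\top\mathbf G$ for R2, and these are equal by the same distributivity. The qualifier ``wherever the derivatives exist'' covers nonsmooth pieces such as $[\cdot]_+$ and the bounded scale predictor; these are shared by both pipelines and lie upstream of propagation.

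The main obstacle is scoping rather than algebra, so I will state the conditions under which equality holds. It holds in exact arithmetic only; floating-point summation order can differ between the two procedures. It requires R2 to use the same joint degree matrix and not separately normalized operators. Zero-degree nodes need the stated convention that their inverse degree factors are zero, which makes both operators vanish consistently on those rows and columns. Asymmetry of $\mathbf P$ and noncommutativity of $\mathbf P_c$ and $\mathbf P_s$ are irrelevant, because the argument never reorders factors.
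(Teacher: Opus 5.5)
Your proposal is correct and follows essentially the same route as the paper. The paper uses distributivity $\mathbf P_c\mathbf H+\mathbf P_s\mathbf H=\mathbf P\mathbf H$ with induction to obtain identical layer states and the common readout $\mathbf K_L\mathbf H^{(0)}$, and the adjoint identity $\mathbf P_c^\top\mathbf Z+\mathbf P_s^\top\mathbf Z=\mathbf P^\top\mathbf Z$ with the chain rule for the gradient claim; your extra observation that both pipelines define the same map $\Theta\mapsto\mathcal L(\Theta)$ settles the gradient claim even more directly.
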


\begin{proof}
Grouping relation contributions by neighbor gives
Eq.~\ref{eq:shared-neighborhood-aggregation}, equivalently
\begin{equation}
\label{eq:joint-equivalence}
\mathbf P_c\mathbf H^{(\ell)}
+\mathbf P_s\mathbf H^{(\ell)}
=\mathbf P\mathbf H^{(\ell)}.
\end{equation}
Induction gives identical layer states and the common readout
$\mathbf K_L\mathbf H^{(0)}$. For an output gradient $\mathbf Z$,
\begin{equation}
\label{eq:joint-adjoint}
\mathbf P_c^\top\mathbf Z+\mathbf P_s^\top\mathbf Z
=\mathbf P^\top\mathbf Z.
\end{equation}
The chain rule through the common readout and initialization
establishes gradient equivalence.
\end{proof}

\paragraph{Propagation work.}
Let $m_c=\operatorname{nnz}(\mathbf P_c)$ and
$m_s=\operatorname{nnz}(\mathbf P_s)$ count directed nonzero edges.
Count one message evaluation per edge and feature coordinate,
excluding graph preparation, initialization, and readout.

\begin{proposition}[Propagation work and consolidation condition]
\label{prop:graph-message-work}
CC-MG's $L$-layer, $d$-dimensional message work is
\begin{equation}
\label{eq:joint-message-work}
\mathcal W_{\mathrm{Joint}}=Ld(m_c+m_s).
\end{equation}
For $B$ streams of width $d$, with depths $L_b$ and edge counts $m_b$,
\begin{equation}
\label{eq:multistream-message-work}
\mathcal W_{\mathrm{streams}}=d\sum_{b=1}^{B}L_bm_b.
\end{equation}
When $\mathcal W_{\mathrm{streams}}>0$, the fractional reduction is
\begin{equation}
\label{eq:message-work-reduction}
\rho_{\mathrm{msg}}
=1-\frac{L(m_c+m_s)}{\sum_{b=1}^{B}L_bm_b}.
\end{equation}
Reduction occurs exactly when $L(m_c+m_s)<\sum_bL_bm_b$.
\end{proposition}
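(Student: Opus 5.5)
The count is direct under the stated cost model: one message evaluation per directed nonzero edge, per feature coordinate, per layer, with graph preparation, initialization, and readout excluded. For Joint, I would use Proposition~\ref{prop:joint-execution} and Eq.~\ref{eq:cscp-propagation}: each layer is one multiplication $\mathbf P\mathbf H^{(\ell)}$.

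The first step is to count $\operatorname{nnz}(\mathbf P)$. Since $\mathbf A=\mathbf A_c+\mathbf A_s$ places $\mathbf C,\mathbf C^\top$ in the off-diagonal user--item blocks and $\bar{\mathbf S}$ in the item--item block, the supports of $\mathbf P_c$ and $\mathbf P_s$ are disjoint. Hence $\operatorname{nnz}(\mathbf P)=m_c+m_s$ and no entries cancel.

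Equivalently, Eq.~\ref{eq:shared-neighborhood-aggregation} charges one message per pair $(p,q)$ with $q\in\mathcal N(p)$. Because the supports are disjoint, $|\mathcal N(p)|=|\mathcal N_c(p)|+|\mathcal N_s(p)|$. Each message costs $d$ coordinate evaluations, and summing over $p$ and over the $L$ layers gives Eq.~\ref{eq:joint-message-work}.

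The main point needing care is this disjointness. Without it, grouping by neighbor would merge coincident edges, and the count would be $\operatorname{nnz}(\mathbf P)\le m_c+m_s$ rather than equality. So I would state explicitly that the block structure of Eq.~\ref{eq:ccmg-adjacency} guarantees it.

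The multi-stream formula follows from the same cost model applied stream by stream. Stream $b$ performs $L_b$ sparse multiplications over $m_b$ directed edges at width $d$, costing $dL_bm_b$. Summing over $b=1,\dots,B$ gives Eq.~\ref{eq:multistream-message-work}.

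Finally, when $\mathcal W_{\mathrm{streams}}>0$, define $\rho_{\mathrm{msg}}=1-\mathcal W_{\mathrm{Joint}}/\mathcal W_{\mathrm{streams}}$. The common factor $d>0$ cancels, which yields Eq.~\ref{eq:message-work-reduction}. Moreover, $\rho_{\mathrm{msg}}>0$ if and only if $\mathcal W_{\mathrm{Joint}}<\mathcal W_{\mathrm{streams}}$. Dividing by $d$, this is equivalent to $L(m_c+m_s)<\sum_bL_bm_b$, which is the stated reduction condition.
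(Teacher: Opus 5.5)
Your proposal is correct and follows essentially the same argument as the paper. Disjointness of the user--item and item--item block supports gives $\sum_p|\mathcal N(p)|=\operatorname{nnz}(\mathbf P)=m_c+m_s$; the per-edge, per-coordinate, per-layer count is then summed over depth and streams, and the ratio (with $d>0$ cancelled) yields $\rho_{\mathrm{msg}}$ and the reduction condition, where your remarks on disjointness and cancellation only spell out steps the paper leaves implicit.
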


\begin{proof}
Disjoint user--item and item--item supports give
$\sum_p|\mathcal N(p)|=m_c+m_s$.
Each layer evaluates $d$ coordinates per nonzero edge.
Multiplying by depth and summing over streams gives the two
work counts; their relative difference gives $\rho_{\mathrm{msg}}$.
\end{proof}

Two modalities retrieve at most $2K|\I|$ directed neighbors
in total. Symmetrization at most doubles this count, and
calibration preserves support; hence
\begin{equation}
\label{eq:semantic-message-budget}
m_s\leq4K|\I|.
\end{equation}
For $B$ equal-depth streams over the interaction support
$m_c>0$, let $\tau=m_s/m_c$. Then
\begin{equation}
\label{eq:equal-depth-work-reduction}
\rho_{\mathrm{msg}}=1-\frac{1+\tau}{B}.
\end{equation}
It is positive exactly when $\tau<B-1$.
Separate streams may use distinct states and depths
\citep{hu2025modality}; the work comparison does not assert
equivalent outputs or redundant messages.

\paragraph{Relation-output communication.}
R2 and Joint have the same message work.
For output accesses, count only relation-output stores and loads
and final next-state stores at operator boundaries.

\begin{corollary}[Eliminated relation-output communication]
\label{cor:output-communication}
If R2 stores two dense $n\times d$ relation outputs and
reads them for separate fusion, then over $L$ layers
\begin{equation}
\label{eq:execution-access}
Q_{\mathrm{R2}}^{\mathrm{out}}=5Lnd,\qquad
Q_{\mathrm{Joint}}^{\mathrm{out}}=Lnd.
\end{equation}
Joint eliminates
\begin{equation}
\label{eq:execution-savings}
\Delta Q^{\mathrm{out}}=4Lnd.
\end{equation}
\end{corollary}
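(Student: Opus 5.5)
The argument is a per-layer count of operator-boundary accesses under the stated convention, followed by summation over the $L$ layers. Only three kinds of access are counted: stores of relation outputs, loads of relation outputs, and the final store of the next state. Each counted object is a dense $n\times d$ block, so each access costs $nd$. Message work and the reads of the sparse operators and of $\mathbf H^{(\ell)}$ are excluded. These costs are identical for R2 and Joint: Proposition~\ref{prop:graph-message-work} shows both evaluate $d(m_c+m_s)$ messages per layer, and both read the same input state. The comparison is therefore restricted to output traffic, as the statement intends.

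For R2 at a fixed layer $\ell$, I would list the operator boundaries in execution order.
\begin{enumerate}
\item The first sparse call computes $\mathbf P_c\mathbf H^{(\ell)}$ and stores it: $nd$.
\item The second sparse call computes $\mathbf P_s\mathbf H^{(\ell)}$ and stores it: $nd$.
\item The separate fusion operator loads both relation outputs: $2nd$.
\item The fusion operator stores their sum as $\mathbf H^{(\ell+1)}$: $nd$.
\end{enumerate}
This gives $5nd$ per layer. Correctness of the fused result is supplied by Eq.~\ref{eq:joint-equivalence} in Proposition~\ref{prop:joint-execution}, so no extra correction pass is needed.

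For Joint, Eq.~\ref{eq:cscp-propagation} applies the assembled $\mathbf P$ in one call. That call writes $\mathbf H^{(\ell+1)}$ directly, with no materialized relation outputs and no fusion loads, so only the final next-state store of $nd$ is counted.

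Summing over the $L$ layers gives $Q_{\mathrm{R2}}^{\mathrm{out}}=5Lnd$ and $Q_{\mathrm{Joint}}^{\mathrm{out}}=Lnd$. Their difference is $4Lnd$, which is Eq.~\ref{eq:execution-savings}. The equal per-layer costs follow from the fixed width $d$ and the fact that the full node set of size $n$ is updated at every layer. The relation outputs must be taken as dense $n\times d$ even though $\mathbf P_s$ has zero user rows, because the hypothesis specifies dense outputs.

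The main obstacle is not arithmetic but fixing the accounting convention precisely enough that the counts are unambiguous. First, the final next-state store must be counted once in both schemes, not counted as a relation-output store in R2. Second, reads of $\mathbf H^{(\ell)}$ must be excluded consistently from both schemes. I would state these conventions explicitly before counting, so that the $5$ versus $1$ coefficients follow directly from the hypothesis in Corollary~\ref{cor:output-communication}.
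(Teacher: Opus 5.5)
Your proposal is correct and follows the paper's own proof: per layer, R2 incurs $2nd$ relation-output stores, $2nd$ fusion loads, and $nd$ next-state stores, whereas Joint incurs only the $nd$ next-state store, and multiplying by $L$ gives $5Lnd$, $Lnd$, and the difference $4Lnd$. Your explicit accounting conventions (excluding sparse-input and $\mathbf H^{(\ell)}$ reads, and counting the next-state store once in each scheme) spell out what the paper states only briefly in the paragraph before the corollary.
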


\begin{proof}
Per layer, R2 incurs $2nd$ intermediate stores, $2nd$ fusion
reads, and $nd$ next-state stores; Joint incurs only the last
$nd$ stores. Multiplication by $L$ gives the counts.
\end{proof}

Joint removes 80\% of these output accesses and reduces
sparse-operator calls from $2L$ to $L$.
This count excludes sparse inputs, aggregation-internal accesses,
and readout. Materialization and relation-wise execution overhead
are also identified in GNN systems \citep{wu2024hector,han2025tlv};
physical latency additionally depends on kernels, caching,
and scheduling.

\subsection{Calibration Invariance}
\label{app:calibration-invariance}
Budget preservation makes RGSC compatible with this shared normalization while refining semantic neighbor allocations.

\begin{lemma}[Calibration invariance]
\label{lem:calibration-invariance}
For fixed $\mathbf C$, preserving every incoming semantic
budget leaves the joint degree matrix and normalized
collaborative operator unchanged:
\begin{equation}
\label{eq:calibration-invariance}
\mathbf D=\mathbf D^0,\qquad \mathbf P_c=\mathbf P_c^0.
\end{equation}
Here the superscript $0$ denotes the raw semantic graph.
\end{lemma}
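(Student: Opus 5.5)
The plan is a direct computation of the joint degrees, split by node type, followed by substitution into the definition of $\mathbf P_c$. Write $\mathbf A^0=\mathbf A_c+\mathbf A_s^0$ for the raw graph and $\mathbf A=\mathbf A_c+\bar{\mathbf A}_s$ for the calibrated one. With $\mathbf C$ fixed, $\mathbf A_c$ is identical in both graphs, and only the item--item block changes from $\mathbf S^0$ to $\bar{\mathbf S}$. The first step is to show that the row sums of $\mathbf A$ and $\mathbf A^0$ agree at every node.

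For a user $u$, the row of $\mathbf A$ contains only $\mathbf C$ entries, since the user--user block is zero. Hence $D_{uu}=\sum_i C_{ui}=D^0_{uu}$.

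For an item $j$, the row sum is $D_{jj}=\sum_u C_{uj}+\sum_i \bar S_{ji}$. By the convention $\bar S_{ji}=\bar w^s_{i\rightarrow j}$, the second sum is the incoming semantic weight of destination $j$. Eq.~\ref{eq:budget-recovery} gives $\sum_i\bar w^s_{i\rightarrow j}=\mathcal B_j^s$. In the raw graph, $\sum_i S^0_{ji}=\sum_{k:(k,j)\in\Es}w^s_{k\rightarrow j}=\mathcal B_j^s$ by definition. Therefore $D_{jj}=D^0_{jj}$.

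The zero-budget case needs separate treatment. When $\mathcal B_j^s=0$, the denominator in Eq.~\ref{eq:budget-recovery} may vanish. I would invoke the stated convention that zero-budget destinations keep zero weights, so both sums equal $0$. If moreover $\sum_u C_{uj}=0$, the node has zero degree in both graphs. The convention that inverse degree factors are zero then makes $\mathbf D^{-1/2}$ agree as well. This establishes $\mathbf D=\mathbf D^0$.

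The second step substitutes into the definition: $\mathbf P_c=\mathbf D^{-1/2}\mathbf A_c\mathbf D^{-1/2}=(\mathbf D^0)^{-1/2}\mathbf A_c(\mathbf D^0)^{-1/2}=\mathbf P_c^0$. This uses that $\mathbf A_c$ depends only on $\mathbf C$, which is held fixed by hypothesis.

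The only delicate point is the indexing convention. The "incoming" budget of item $j$ must correspond to the row sum of $\mathbf A$, not a column sum, because $\mathbf D=\operatorname{Diag}(\mathbf A\mathbf 1)$. The destination-by-source convention $\bar S_{ji}=\bar w^s_{i\rightarrow j}$ makes row $j$ collect exactly the incoming weights to $j$, so the preserved quantity is the right one and no symmetry of $\bar{\mathbf S}$ is needed. This matters because calibration generally breaks symmetry, and the appendix explicitly allows asymmetric operators. I do not expect any other obstacle; $\mathbf P_s$ itself does change in general, and the lemma makes no claim about it.
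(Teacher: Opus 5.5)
Your proposal is correct and follows the paper's proof. Both arguments note that user degrees depend only on $\mathbf C$, and that item degrees equal $\sum_u C_{uj}+\mathcal B_j^s$ with both terms unchanged, which gives $\mathbf D=\mathbf D^0$; substituting into the definition with the fixed $\mathbf A_c$ then gives $\mathbf P_c=\mathbf P_c^0$. Your handling of zero-budget destinations and of the row-sum (destination-by-source) convention fills in details the paper's short proof leaves implicit.
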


\begin{proof}
RGSC ensures $\bar{\mathbf S}\mathbf1=\mathbf S^0\mathbf1$.
User degrees depend only on $\mathbf C$, and item degrees
are $\sum_u C_{uj}+\mathcal B_j^s$, with both terms unchanged.
Hence $\mathbf D=\mathbf D^0$. The collaborative adjacency
is also unchanged, so using this degree matrix on both sides
gives $\mathbf P_c=\mathbf P_c^0$.
\end{proof}

This property fixes collaborative propagation coefficients;
the reallocated semantic messages can still change learned
states and recommendation scores.

\section{Supplementary Experimental Settings}
\label{app:experimental-details}

\subsection{Graph-Construction Controls}
\label{app:control-details}
\emph{Uniform C} makes interaction weights constant at matched total
weight; \emph{Equal V/T} uses equal modality coefficients while
preserving incoming semantic budgets before RGSC and node-information inputs;
\emph{Sep.\ Norm.} independently normalizes relation blocks, with
validation-selected relative strength and matched total operator weight.
These controls retain support, RGSC, depth, and shared-state updates.
The retrained \emph{no-mixed} control uses the relation-pure readout
(Corollary~\ref{cor:mixed-dependence}) with the same $\mathbf P_c$ and
$\mathbf P_s$. It prevents C--S PEC formation while retaining V--T PECs
in the semantic branch.
\emph{Raw} disables calibration ($\gamma=0$); \emph{Prov.} omits budget
recovery; \emph{Full} includes both RGSC stages. Candidate support,
global modality weights, and initialization schemes are retained.
Raw and Full share $\mathbf D$ and $\mathbf P_c$
(Lemma~\ref{lem:calibration-invariance}).

\subsection{Diagnostics and Timing Operations}
\label{app:diagnostics-timing}
Figure~\ref{fig:fragmentation-evidence}(a) matches interaction degree,
semantic support, and incoming semantic mass; (b) intervenes in fixed
models, retaining graph support, initialization, first-order information,
and single-modality paths. Edge relevance evaluates fixed candidate
edges pooled across destinations using AUC, average precision (AP),
and positive lift; held-out labels are used only for evaluation.
Activity groups use training activity, and sensitivity grids use Baby.
Timings use CUDA synchronization and repeated measurements after warm-up.

\section{Supplementary Experimental Results}
\label{app:extra-results}

\subsection{Edge Relevance and Downstream Recommendation}
\label{app:edge-relevance}

In Figure~\ref{fig:rarc-edge-validation}, Full improves AP and positive
lift over Raw on all datasets. Its AUC increases on Baby and Sports
but decreases on Clothing; Prov.\ exceeds Full on all displayed edge
metrics. Budget recovery preserves local pre-normalization neighbor order but changes
score scales across destinations, which can alter pooled rankings.
Despite lower pooled edge-ranking metrics, Full yields better
recommendation performance than Prov.\ on every dataset. Edge ranking and
message allocation assess different uses of the weights: better pooled
edge discrimination does not necessarily yield better downstream recommendation.
The downstream results favor reallocation within preserved budgets.

\begin{figure}[!htbp]
  \centering
  \includegraphics[width=.96\columnwidth]{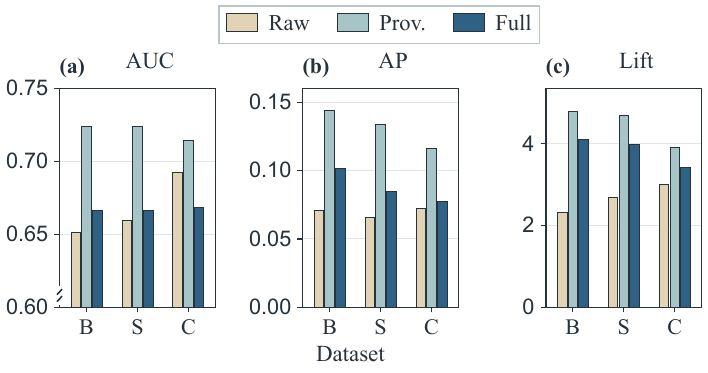}
  \caption{RGSC edge relevance. (a) AUC; (b) AP; (c) positive lift. B/S/C denote Baby/Sports/Clothing. AUC is truncated at 0.60.}
  \label{fig:rarc-edge-validation}
  \Description{Three panels compare pooled held-out edge AUC, AP,
  and positive lift for Raw, Prov., and Full on Baby, Sports, and Clothing.}
\end{figure}

\subsection{Parameter Sensitivity}
\label{app:parameter-sensitivity}

Figure~\ref{fig:app-semantic-sensitivity}(a,b) shows that semantic
strength depends on neighborhood size: stronger weighting remains
competitive for smaller neighborhoods but reduces performance in the
densest setting. Neighborhood density and message strength should
therefore be tuned jointly.

In Figure~\ref{fig:app-calibration-sensitivity}(c,d), Raw yields
Recall@20/NDCG@20 of .1018/.0442 at every temperature because calibration
is disabled. Among positive magnitudes, Recall and NDCG favor different
strengths; stronger calibration does not improve both uniformly.

\begin{figure}[!htbp]
  \centering
  \includegraphics[width=.96\columnwidth]{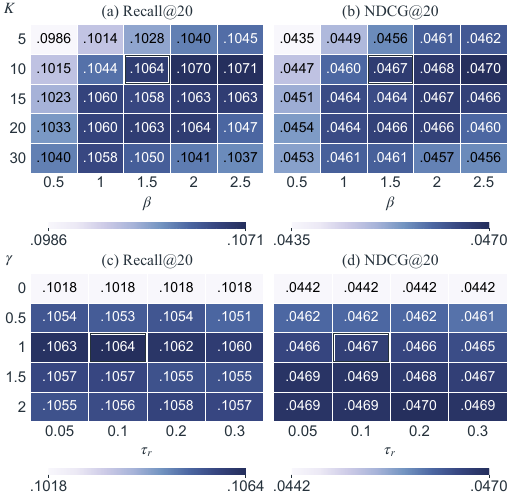}
  \caption{Parameter sensitivity on Baby. (a,b) CC-MG Recall@20/NDCG@20 across $K$ and $\beta$; (c,d) RGSC Recall@20/NDCG@20 across $\gamma$ and $\tau_r$. Outlines mark validation-selected defaults.}
  \label{fig:app-semantic-sensitivity}
  \label{fig:app-calibration-sensitivity}
  \Description{Four heatmaps show Recall@20 and NDCG@20 across
  graph-construction and calibration parameters; outlined cells
  indicate validation defaults.}
\end{figure}

\end{document}